\def\B{{\cal B}}
\def\C{{\cal C}}
\def\D{{\cal D}}
\def\M{{\cal M}}
\def\N{{\cal N}}
\def\F{{\cal F}}
\def\reals{\mathbb{R}}
\def\ereals{\overline{\mathbb{R}}}
\def\uball{\mathbb{B}}
\def\comp{\raise 1pt \hbox{$\scriptstyle\circ$}}
\def\dom{\mathop{\rm dom}}
\def\upto{{\raise 1pt \hbox{$\scriptstyle \,\nearrow\,$}}}
\def\downto{{\raise 1pt \hbox{$\scriptstyle \,\searrow\,$}}}
\def\rc{\mathop{\rm rc}}
\def\epi{\mathop{\rm epi}}
\def\pos{\mathop{\rm pos}}
\def\FF{(\F_t)_{t=0}^T}
\newtheorem{theorem}{Theorem}
\newtheorem{proposition}[theorem]{Proposition}
\newtheorem{lemma}[theorem]{Lemma}
\newtheorem{corollary}[theorem]{Corollary}
\newtheorem{definition}[theorem]{Definition}
\newtheorem{example}[theorem]{Example}
\newtheorem{remark}[theorem]{Remark}
\newenvironment{proof}
{\begin{trivlist}\item[\, 
{\bf Proof.}]}{{\hfill $\square$}\end{trivlist}}
\title{Superhedging in illiquid markets}
\author{Teemu Pennanen\footnote{Department of Mathematics and Systems Analysis, Helsinki University of Technology, P.O.Box 1100, FI-02015 TKK, Finland, {\tt teemu.pennanen@tkk.fi}}}
\begin{document}
\maketitle

\begin{abstract}
We study contingent claims in a discrete-time market model where trading costs are given by convex functions and portfolios are constrained by convex sets. In addition to classical frictionless markets and markets with transaction costs or bid-ask spreads, our framework covers markets with nonlinear illiquidity effects for large instantaneous trades. We derive dual characterizations of superhedging conditions for contingent claim processes in a market without a cash account. The characterizations are given in terms of stochastic discount factors that correspond to martingale densities in a market with a cash account. The dual representations are valid under a topological condition and a weak consistency condition reminiscent of the ``law of one price'', both of which are implied by the no arbitrage condition in the case of classical perfectly liquid market models. We give alternative sufficient conditions that apply to market models with nonlinear cost functions and portfolio constraints.
\end{abstract}
{\bf Key words} Illiquidity, portfolio constraints, claim processes, superhedging, deflators, convex duality

\section{Introduction}

The notion of arbitrage is often given a central role when studying pricing and hedging of contingent claims in financial markets. In classical perfectly liquid market models, there are two good reasons for this. First, a violation of the no arbitrage condition leads to an unnatural situation where one can find self-financing trading strategies that generate infinite proceeds out of zero initial investment. Second, as discovered by Schachermayer~\cite{sch92}, the no arbitrage condition implies the closedness of the set of claims that can be superhedged with zero cost. The closedness yields dual characterizations of superhedging conditions in terms of e.g.\ martingale measures and state price deflators. 

In illiquid markets, however, things are different. A violation of the no arbitrage condition does not necessarily mean that one can generate infinite proceeds by simple scaling of arbitrage strategies. Indeed, illiquidity effects may come into play when trades get larger; see {\c{C}}etin and Rogers~\cite{cr7} or Pennanen~\cite{pen7,pen8}. On the other hand, even in the case of linear models, the no arbitrage condition is not necessary for closedness of the set of claims hedgeable with zero cost. There may exist other economically meaningful conditions that yield the closedness and corresponding dual characterizations of superhedging conditions.

This paper studies superhedging in a nonlinear discrete time model from \cite{pen7,pen8} where trading costs are given by convex cost functions and portfolios may be constrained by convex constraints. The model generalizes many better-known models such as the classical linear model, the transaction cost model of Jouini and Kallal~\cite{jk95a}, the sublinear model of Kaval and Molchanov~\cite{km5}, the illiquidity model of {\c{C}}etin and Rogers~\cite{cr7} as well as the linear models with portfolio constraints of Pham and Touzi~\cite{pt99}, Napp~\cite{nap3}, Evstigneev, Sch{\"u}rger and Taksar~\cite{est4} and Rokhlin~\cite{rok5b}. Our model covers nonlinear illiquidity effects associated with instantaneous trades (market orders) but it assumes that agents have no market power in the sense that trades do not affect the costs of subsequent trades. This is analogous to the models of {\c{C}}etin, Jarrow and Protter~\cite{cjp4}, {\c{C}}etin, Soner and Touzi~\cite{cst7} and Rogers and Singh~\cite{rs7}, the last one of which gives economic motivation for the assumption. We avoid long term price impacts because they interfere with convexity which is essential in many aspects of pricing and hedging. Convexity becomes an important issue also in numerical calculations; see e.g.\ Edirisinghe, Naik and Uppal~\cite{enu93}.

Unlike in most of the above papers, we do not assume the existence of a cash account a priori. This is important when studying {\em contingent claim processes} which may give pay-outs not only at maturity but throughout the whole life time of the claim. Such claim processes are common in practice where wealth cannot be transferred quite freely in time. Accordingly, much as in Jaschke and K\"uchler~\cite{jk1}, we study pricing in terms of {\em premium processes} instead of a single premium (price) in the beginning. This is quite natural since much of trading in practice consists of exchanging sequences of cash flows. The usual pricing and hedging problems are obtained as special cases when the premium process is null after the initial date and claims have pay-outs only at maturity. There is a simple and quite natural condition under which these more general pricing problems are well-defined and nontrivial in convex, possibly nonlinear market models. The condition is reminiscent of the ``law of one price'' (or the ``no good deal of the second kind'' in \cite{jk1}) which is weaker that the no arbitrage condition. 

We derive dual characterizations of superhedging in terms of deflators that correspond to martingale densities in a market with a cash account. In the presence of nonlinearities, a new term appears in pricing formulas much like in dual representations of convex risk measures which are not positively homogeneous. This was observed in F\"ollmer and Schied~\cite[Proposition~16]{fs2} in the presence of convex constraints in the classical linear model with a cash-account; see also Staum~\cite{sta4}, Frittelli and Scandolo~\cite{fs6}, and Kl\"oppel and Scweizer~\cite[Section~4]{ks7}.

Our duality results hold under the assumption that the set of claims hedgeable with zero cost is closed in probability. The closedness condition is known to be satisfied under the no arbitrage condition in classical perfectly liquid models \cite{sch92}. Other sufficient conditions for the closedness can be derived from the results of Schachermayer~\cite{sch4}, Kabanov, Rasonyi and Stricker~\cite{krs3} as well as the forthcoming paper Pennanen and Penner~\cite{pp8}. Whereas \cite{sch4} and \cite{krs3} deal with conical models, \cite{pp8} allows for more general convex models. In these papers, closedness is obtained for the larger set of portfolio-valued claims, which is not necessary when studying claims with cash delivery. More importantly, none of these papers allows for portfolio constraints. We give sufficient conditions that apply to claims with cash delivery under general nonlinear cost functions and portfolio constraints.

The rest of this paper is organized as follows, The next section defines the market model. Section~\ref{sec:sh} states the superhedging and pricing problems for claim and portfolio processes and studies their properties in algebraic terms. Section~\ref{sec:dual} derives dual characterizations of the superhedging conditions for integrable processes in terms of bounded deflators. This is done under the assumption that the set of claims hedgeable with zero cost is closed in probability. Section~\ref{sec:cl} derives sufficient conditions for the closedness. Section~\ref{sec:conclusions} makes some concluding remarks.

\section{The market model}\label{sec:mm}

Consider a financial market where trading occurs over finite discrete time $t=0,\ldots,T$. Let $(\Omega,\F,P)$ be a probability space with a filtration $\FF$ describing the information available to an investor at each $t=0,\ldots,T$. For simplicity, we assume that $\F_0$ is the trivial $\sigma$-algebra $\{\emptyset,\Omega\}$ and that each $\F_t$ is complete with respect to $P$. The Borel $\sigma$-algebra on $\reals^J$ will be denoted by $\B(\reals^J)$.

\begin{definition}\label{ccp}
A \em convex cost process \em is a sequence $S=(S_t)_{t=0}^T$ of extended real-valued functions on $\reals^J\times\Omega$ such that for $t=0,\ldots,T$,
\begin{enumerate}
\item
the function $S_t(\cdot,\omega)$ is convex, lower semicontinuous and vanishes at $0$ for every $\omega\in\Omega$,
\item
$S_t$ is $\B(\reals^J)\otimes\F_t$-measurable.
\end{enumerate}
A cost process $S$ is said to be {\em nondecreasing, nonlinear, polyhedral, positively homogeneous, linear,} \ldots if the functions $S_t(\cdot,\omega)$ have the corresponding property for every $\omega\in\Omega$.
\end{definition}

The interpretation is that buying a portfolio $x_t\in\reals^J$ at time $t$ and state $\omega$ costs $S_t(x_t,\omega)$ units of cash. The measurability property implies that if the portfolio $x_t$ is $\F_t$-measurable then the cost $\omega\mapsto S_t(x_t(\omega),\omega)$ is also $\F_t$-measurable (see e.g.\ \cite[Proposition~14.28]{rw98}). This just means that the cost is known at the time of purchase. We pose no smoothness assumptions on the functions $S_t(\cdot,\omega)$. The measurability property together with lower semicontinuity in Definition~\ref{ccp}  mean that $S_t$ is an $\F_t$-measurable \em normal integrand \em in the sense of Rockafellar~\cite{roc68}; see also Rockafellar and Wets~\cite[Chapter 14]{rw98}.

Definition~\ref{ccp}, originally given in \cite{pen6}, was motivated by the structure of double auction markets where the costs of market orders are polyhedral convex functions of the number of shares bought. The classical linear market model corresponds to $S_t(x,\omega)=s_t(\omega)\cdot x$, where $s=(s_t)_{t=0}^T$ is an $\reals^J$-valued $\FF$-adapted price process. Definition~\ref{ccp} covers also many other models from literature; see \cite{pen6}.

We allow for general convex portfolio constraints where at each $t=0,\ldots,T$ the portfolio $x_t$ is restricted to lie in a convex set $D_t$ which may depend on $\omega$.

\begin{definition}\label{pcp}
A {\em convex portfolio constraint process} is a sequence $D=(D_t)_{t=0}^T$ of set-valued mappings from $\Omega$ to $\reals^J$ such that for $t=0,\ldots,T$,
\begin{enumerate}
\item
$D_t(\omega)$ is closed, convex and $0\in D_t(\omega)$ for every $\omega\in\Omega$,
\item
the set-valued mapping $\omega\mapsto D_t(\omega)$ is $\F_t$-measurable.
\end{enumerate}
A constraint process $D$ is said to be {\em polyhedral, conical,} \ldots if the sets $D_t(\omega)$ have the corresponding property for every $\omega\in\Omega$.
\end{definition}

The classical case without constraints corresponds to $D_t(\omega)=\reals^J$ for every $\omega\in\Omega$ and $t=0,\ldots,T$. In addition to obvious ``short selling'' restrictions, portfolio constraints can be used to model situations where one encounters different interest rates for lending and borrowing. This can be done by introducing two separate ``cash accounts'' whose unit prices appreciate according to the two interest rates and restricting the investments in these assets to be nonnegative and nonpositive, respectively. A simple example that goes beyond conical and deterministic constraints is when there are nonzero bounds on market values of investments.

\begin{remark}[Market values]\label{rem:mv}
Large investors usually view investments in terms of their market values rather than in units of shares; see \cite{kab99} and \cite{mp8}. If $s=(s_t)_{t=0}^T$ is a componentwise strictly positive $\reals^J$-valued process, we can write
\[
S_t(x,\omega) = \varphi_t(M_t(\omega)x,\omega)
\]
where $\varphi_t(h):=S_t((h^j/s_t^j)_{j\in J})$ and $M_t(\omega)$ is the diagonal matrix with entries $s_t^j(\omega)$. Everything that is said below can be stated in terms of the variables $h_t^j(\omega) := s_t^j(\omega)x_t^j(\omega)$. If $s_t(\omega)$ is a vector of ``market prices'' of the assets $J$, then the vector $h_t(\omega)$ gives the ``market values'' of the assets held. Market prices are usually understood as the unit prices associated with infinitesimal trades. If the cost function $S_t(\cdot,\omega)$ is smooth at the origin, then $s_t(\omega)=\nabla S_t(0,\omega)$ is the natural definition. If $S_t(\omega)$ is nondifferentiable at the origin, then $s_t(\omega)$ could be any element of the {\em subdifferential}
\[
\partial S_t(0,\omega) :=\{s\in\reals^J\,|\,S_t(x,\omega)\ge S_t(0,\omega) + s\cdot x\quad\forall x\in\reals^J\}.
\]
In double auction markets, $\partial S_t(0,\omega)$ is the product of the intervals between the bid and ask prices of the assets $J$; see \cite{pen6}.
\end{remark}

\section{Superhedging}\label{sec:sh}

When wealth cannot be transfered freely in time (due to e.g.\ different interest rates for lending and borrowing) it is important to distinguish between payments/costs that occur at different dates. A {\em (contingent) claim process} is an $\reals$-valued stochastic process $c=(c_t)_{t=0}^{T}$ that is adapted to $(\F_t)_{t=0}^T$. The value of $c_t$ is interpreted as the amount of cash the owner of the claim receives at time $t$. Such claim processes are common e.g.\ in insurance. The set of claim processes will be denoted by $\M$. 

A {\em portfolio process}, is an $\reals^J$-valued stochastic process $x=(x_t)_{t=0}^T$ that is adapted to $(\F_t)_{t=0}^T$. The vector $x_t$ is interpreted as a portfolio that is held over the period $[t,t+1]$. The set of portfolio processes will be denoted by $\N$. An $x\in\N_0:=\{x\in\N\,|\,x_T=0\}$ superhedges a claim process $c\in\M$ with zero cost if it satisfies the {\em budget constraint}\footnote{Given an $\F_t$-measurable function $z_t:\Omega\to\reals^J$, $S_t(z_t)$ denotes the extended real-valued random variable $\omega\mapsto S_t(z_t(\omega),\omega)$. By \cite[Proposition~14.28]{rw98}, $S_t(z_t)$ is $\F_t$-measurable whenever $z_t$ is $\F_t$-measurable.} 
\[
S_t(x_t-x_{t-1}) + c_t \le 0\quad P\text{-a.s.}\quad t=0,\ldots,T.
\]
Here and in what follows, we always set $x_{-1}=0$. At the terminal date, we require that everything is liquidated so the budget constraint becomes $S_T(-x_{T-1})+c_T\le 0$. The above is a numeraire-free way of writing the superhedging property; see Example~\ref{ex:numeraire}. In the case of a stock exchange, the interpretation is that the portfolio is updated by market orders in a way that allows for delivering the claim without any investments over time. In particular, when $c_t$ is strictly positive, the cost $S_t(x_t-x_{t-1})$ of updating the portfolio from $x_{t-1}$ to $x_t$ has to be strictly negative (market order of $x_t-x_{t-1}$ involves more selling than buying).  

The set of all claim processes that can be superhedged with zero cost under constraints $D$ will be denoted by $\C$. That is, 
\[
\C = \{c\in\M\,|\,\exists x\in\N_0:\ x_t\in D_t,\ S_t(\Delta x_t)+c_t\le 0,\ t=0,\ldots,T\},
\]
where $\N_0=\{x\in\N\,|\,x_T=0\}$. Many aspects of superhedging can be conveniently studied in terms of this set. 

\begin{example}[Numeraire and stochastic integrals]\label{ex:numeraire}
Assume that there is a perfectly liquid asset, say $0\in J$, such that
\begin{align*}
S_t(x,\omega) &= x^0 + \tilde S_t(\tilde x,\omega),\\
D_t(\omega) &= \reals\times\tilde D_t(\omega),
\end{align*}
where $x=(x^0,\tilde x)$ and $\tilde S$ and $\tilde D$ are the cost process and the constraints for the remaining ``risky'' assets $\tilde J=J\setminus\{0\}$.
Given $\tilde x=(\tilde x_t)_{t=0}^T$, we can define 
\[
x_t^0 = x^0_{t-1} - \tilde S_t(\tilde x_t-\tilde x_{t-1}) -  c_t \quad t=0,\ldots,T-1,
\]
so that the budget constraint holds as an equality for $t=1,\ldots,T-1$ and
\[
x^0_{T-1} = - \sum_{t=0}^{T-1}\tilde S_t(\tilde x_t-\tilde x_{t-1}) - \sum_{t=0}^{T-1} c_t.
\]
We then get the expression
\[
\C = \{c\in\M\,|\,\exists\tilde x:\ \tilde x_t\in\tilde D_t,\ \sum_{t=0}^T\tilde S_t(\tilde x_t-\tilde x_{t-1}) + \sum_{t=0}^T c_t \le 0\}.
\]
If moreover, the cost process $\tilde S$ is linear, i.e.\ $\tilde S_t(\tilde x)=\tilde s_t\cdot\tilde x$ we have
\[
\sum_{t=0}^T\tilde S_t(\tilde x_t-\tilde x_{t-1}) = \sum_{t=0}^T\tilde s_t\cdot(\tilde x_t-\tilde x_{t-1}) = -\sum_{t=0}^{T-1}\tilde x_t\cdot(\tilde s_{t+1}-\tilde s_t),
\]
so that
\[
\C = \{c\in\M\,|\,\exists\tilde x:\ \tilde x_t\in\tilde D_t,\ \sum_{t=0}^T c_t \le \sum_{t=0}^{T-1}\tilde x_t\cdot(\tilde s_{t+1}-\tilde s_t)\}.
\]
Thus, when a numeraire exists, hedging of a claim process can be reduced to hedging cumulated claims at the terminal date and if the cost process is linear, the hedging condition can be written in terms of a stochastic integral. This is essentially the market model studied e.g.\ in \cite{fs2}, \cite[Chapter~9]{fs4} and \cite[Section~4]{ks7}, where constraints on the risky assets were considered.
\end{example}

In problems of superhedging, one usually looks for the initial endowments (premiums) that allow, without subsequent investments, for delivering a claim with given maturity. Since in illiquid markets, cash at different dates are genuinely different things, it makes sense to study superhedging in terms of ``premium processes''. A {\em premium process} is a real-valued adapted stochastic process $p=(p_t)_{t=0}^T$ of cash flows that the seller receives in exchange for delivering a claim $c=(c_t)_{t=0}^T$. We say that $p\in\M$ is a {\em superhedging premium} for $c\in\M$ if there exists a portfolio process $x\in\N_0$ such that
\[
x_t\in D_t,\quad S_t(x_t-x_{t-1}) + c_t \le p_t
\]
almost surely for every $t=0,\ldots,T$. This can be written as 
\[
c-p\in \C.
\]
We are thus looking at situations where one sequence of payments is exchanged for another and the problem is to characterize those exchanges where residual risks can be completely hedged by an appropriate trading strategy.

Much research has been devoted to the case where premium is paid only in the beginning and claims only at the end. This corresponds to the case $p=(p_0,0,\ldots,0)$ and $c=(0,\ldots,0,c_T)$. The dynamic framework above is not only mathematically convenient (claims and premiums live in the same space) but also practical since much of trading consists of exchanging sequences of cash flows. This is the case e.g.\ in swap contracts where a stochastic sequence of payments is exchanged for a sequence of deterministic ones. Also, in various insurance contracts premiums are paid annually instead of a single payment in the beginning.

In some situations, a premium process $p\in\M$ is given and the question is what multiple of $p$ will be sufficient to hedge a claim $c\in\M$. This is the case e.g.\ in some defined benefit pension plans where the premium process is a fraction (the contribution rate) of the salary of the insured. In swap contracts, the premium process is often defined as a multiple of a constant sequence. Given a premium process $p\in\M$, we define the function $\pi:\M\to\ereals$ by
\[
\pi(c) := \inf\{\alpha\,|\,c-\alpha p\in \C\}.
\]
In the case $p=(1,0,\ldots,0)$, $\pi(c)$ gives the least initial investment that allows for delivering $c\in\M$ without risk or subsequent investments. The above definition of $\pi$ with a general $p\in\M$ is similar to \cite{jk1} where risk measures on general vector spaces were studied. We call $\pi(c)$ the {\em super hedging cost} of $c$.

Our subsequent analysis will be largely based on the following simple observation from \cite{pen6}. Here $\M_-$ denotes the set of nonpositive claim processes and 
\[
\rc C := \{c\,|\,c'+\alpha c\in C\quad\forall c'\in C,\ \alpha>0\}
\]
denotes the recession cone of a nonempty set $C$; see~\cite[Section~8]{roc70a}. Note that if $C$ is a cone, then $\rc C= C$.

\begin{lemma}\label{lem:convex}
The set $\C$ is convex and $\M_-\subset\rc \C$. If $S$ is sublinear and $D$ is conical, then $\C$ is a cone.
\end{lemma}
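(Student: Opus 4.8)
The plan is to verify each assertion directly from the defining representation of $\C$, exploiting the fact that the relevant properties of the data hold pointwise in $\omega$: convexity (and later positive homogeneity) of $S_t(\cdot,\omega)$, and convexity (later conicality) of $D_t(\omega)$. For convexity of $\C$, I would take $c^1,c^2\in\C$ with associated portfolio processes $x^1,x^2\in\N_0$ satisfying $x^i_t\in D_t$ and $S_t(\Delta x^i_t)+c^i_t\le 0$ almost surely, fix $\lambda\in[0,1]$, and propose $x:=\lambda x^1+(1-\lambda)x^2$ as a witness for $\lambda c^1+(1-\lambda)c^2$. This $x$ is again adapted with $x_T=0$, so $x\in\N_0$, and $x_t\in D_t$ almost surely since $D_t(\omega)$ is convex. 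The budget constraint then follows from pointwise convexity of $S_t(\cdot,\omega)$, which gives
\[
S_t(\Delta x_t)\le\lambda S_t(\Delta x^1_t)+(1-\lambda)S_t(\Delta x^2_t)\quad P\text{-a.s.},
\]
so that adding $\lambda c^1_t+(1-\lambda)c^2_t$ and regrouping yields a quantity bounded above by $\lambda\cdot 0+(1-\lambda)\cdot 0=0$.

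For the inclusion $\M_-\subset\rc\C$, I would first note that $\C$ is nonempty, since the null portfolio witnesses $0\in\C$ (the functions $S_t$ vanish at the origin and $0\in D_t$), so the recession cone is well defined. Given $c\in\M_-$, an arbitrary $c'\in\C$ with witness $x$, and $\alpha>0$, I would reuse the very same $x$ as a witness for $c'+\alpha c$: since $\alpha c_t\le 0$ almost surely, one has $S_t(\Delta x_t)+(c'_t+\alpha c_t)\le S_t(\Delta x_t)+c'_t\le 0$, whence $c'+\alpha c\in\C$. As $c'$ and $\alpha$ were arbitrary, $c\in\rc\C$.

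For the last assertion, assuming $S$ sublinear and $D$ conical, I would take $c\in\C$ with witness $x$ and $\alpha>0$ and propose $\alpha x$ as a witness for $\alpha c$: conicality of $D_t(\omega)$ gives $\alpha x_t\in D_t$, and positive homogeneity of $S_t(\cdot,\omega)$ gives $S_t(\Delta(\alpha x_t))=\alpha S_t(\Delta x_t)$, so that $S_t(\Delta(\alpha x_t))+\alpha c_t=\alpha(S_t(\Delta x_t)+c_t)\le 0$. Combined with the already established convexity, this shows $\C$ is a cone. I do not anticipate a genuine obstacle here, as every step is a direct pointwise verification; the only points requiring minor care are confirming that the constructed processes remain adapted and belong to $\N_0$, and interpreting all inequalities in the almost-sure sense, using that the stated convexity and homogeneity properties hold for each fixed $\omega$.
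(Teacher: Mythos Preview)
Your argument is correct and is exactly the natural direct verification one would expect; note that the paper itself does not supply a proof of this lemma but merely cites it from \cite{pen6}, so there is nothing to compare against beyond observing that your pointwise approach is the standard one.
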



It is natural to assume that the premium $p\in\M$ is desirable in the sense that for any $c\in \C$,
\[
c-\alpha p\in \C\quad\forall\alpha>0,
\]
i.e.\ that $-p\in\rc \C$. This condition was proposed already in \cite[page 187]{jk1} in the context of more abstract acceptance sets. Since $\M_-\subset\rc \C$, the condition $-p\in\rc \C$ holds in particular if $p$ is in the set $\M_+$ of nonnegative claims. This certainly holds with the traditional choice $p=(1,0,\ldots,0)$. On the other hand, we should have $p\notin\rc \C$ since otherwise, $\pi(c)=-\infty$ for every 
\[
c\in\dom\pi:=\{c\in\M\,|\,\pi(c)<+\infty\}.
\]
In a market with a cash account (see Example~\ref{ex:numeraire}), the condition $(1,0,\ldots,0)\notin\rc \C$ means that every claim leaves the set $\C$ if a sufficiently large positive constant is added to it.

The existence of a $p\in\M$ such that $-p\in\rc \C$ and $p\notin\rc \C$ is equivalent to $\C\ne\M$ which just means that there is at least one claim that cannot be superhedged with zero cost. Indeed, there is no such $p$ iff $\rc \C$ is a linear space. Since $\M_-\subset\rc \C$, by Lemma~\ref{lem:convex}, this would mean that $\rc \C=\M$ which is equivalent to $\C=\M$.

\begin{proposition}\label{prop:pi}
Assume that $-p\in\rc \C$ and $p\notin\rc \C$. Then
\begin{enumerate}
\item
$\pi(\alpha_1c_1+\alpha_2c_2)\le\alpha_1 \pi(c_1)+\alpha_2 \pi(c_2)$ if $\alpha_i>0$ and $\alpha_1+\alpha_2=1$,
\item
$\pi(c_1)\le \pi(c_2)$ if $c_1\le c_2$,
\item
$\pi(c+\alpha p)=\pi(c)+\alpha$ for $\alpha\in\reals$ and $c\in\M$,
\item
$\pi(0)\le 0$.
\end{enumerate}
If $\C$ is a cone, then
\begin{enumerate}
\item[5.]
$\pi(\alpha c)=\alpha \pi(c)$ for $\alpha >0$.
\end{enumerate}
If $\C$ is algebraically closed, then 
\begin{enumerate}
\item[6.]
$\pi$ is proper, i.e.\ $\pi(c)>-\infty$ for all $c\in\M$,
\item[7.]
$\epi\pi:=\{(c,\alpha)\,|\,\pi(c)\le\alpha\}=\{(c,\alpha)\,|\,c-\alpha p\in \C\}$.
\end{enumerate}
\end{proposition}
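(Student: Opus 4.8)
The plan is to reduce every assertion to one structural observation about the ``premium direction'' $p$. Because $-p\in\rc\C$, for each $c\in\M$ the set $A(c):=\{\alpha\in\reals\mid c-\alpha p\in\C\}$ is upward closed in $\reals$: if $c-\alpha p\in\C$ and $\alpha'\ge\alpha$ then
\[
c-\alpha'p=(c-\alpha p)+(\alpha'-\alpha)(-p)\in\C,
\]
since $-p\in\rc\C$ and $\alpha'-\alpha\ge0$. Thus $A(c)$ is an interval unbounded above with $\pi(c)=\inf A(c)$, and $c-\alpha p\in\C$ whenever $\alpha>\pi(c)$, the only doubtful case being the endpoint $\alpha=\pi(c)$. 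I would record this first, as it is used throughout.

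Next I would dispatch 1--5 directly. For 1, given $\beta_i>\pi(c_i)$ I take the convex combination of the two memberships $c_i-\beta_i p\in\C$ and use convexity of $\C$ (Lemma~\ref{lem:convex}) to get $(\alpha_1c_1+\alpha_2c_2)-(\alpha_1\beta_1+\alpha_2\beta_2)p\in\C$, then let $\beta_i\downarrow\pi(c_i)$ (the cases where some $\pi(c_i)=\pm\infty$ being immediate). For 2, if $c_1\le c_2$ then $c_1-c_2\in\M_-\subset\rc\C$, so from $c_2-\alpha p\in\C$ I obtain $c_1-\alpha p=(c_2-\alpha p)+(c_1-c_2)\in\C$, whence $A(c_2)\subset A(c_1)$ and $\pi(c_1)\le\pi(c_2)$. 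Property 3 is the change of variables $\beta\mapsto\beta-\alpha$ in the defining infimum, and 5 is the change of variables $\beta\mapsto\beta/\alpha$ combined with $\alpha(c-\gamma p)\in\C\Leftrightarrow c-\gamma p\in\C$ when $\C$ is a cone. For 4 I note $0\in\C$ (take $x=0$, using $0\in D_t$ and $S_t(0)=0$), so $0\in A(0)$ and $\pi(0)\le0$.

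The substance is in 6 and 7, where algebraic closedness of $\C$ must enter; here I expect the real work. The natural reading is that $\C$ meets every line in a (relatively) closed set, equivalently that $\C$ contains every point linearly accessible from it. For 7, the inclusion $\{(c,\alpha)\mid c-\alpha p\in\C\}\subset\epi\pi$ is immediate from the definition of $\pi$; for the reverse, if $\pi(c)\le\alpha$ then upward closedness gives $c-(\alpha+\eps)p\in\C$ for every $\eps>0$, and these points all lie on the single line $\{c-sp\mid s\in\reals\}$ and tend to $c-\alpha p$ as $\eps\downarrow0$, so algebraic closedness yields $c-\alpha p\in\C$. For 6 I argue by contradiction: if $\pi(c_0)=-\infty$ then $A(c_0)=\reals$, i.e.\ $c_0+\beta p\in\C$ for all $\beta$, while $p\notin\rc\C$ furnishes $c'\in\C$ and $\lambda>0$ with $c'+\lambda p\notin\C$. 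The hard part is to contradict this, and the trick I would use is to place the approximating points on a line \emph{not} parallel to $p$: for $\beta>\lambda$, convexity gives
\[
\Big(1-\tfrac{\lambda}{\beta}\Big)c'+\tfrac{\lambda}{\beta}(c_0+\beta p)=(c'+\lambda p)+\tfrac{\lambda}{\beta}(c_0-c')\in\C,
\]
and as $\beta\to\infty$ these lie on the line through $c'+\lambda p$ in direction $c_0-c'$ and converge to $c'+\lambda p$; algebraic closedness then forces $c'+\lambda p\in\C$, the desired contradiction. The main obstacle is precisely this choice of auxiliary line: the approximation of $c'+\lambda p$ produced by convexity is not parallel to $p$, so the closedness-along-lines hypothesis must be invoked in the direction $c_0-c'$ rather than in the premium direction.
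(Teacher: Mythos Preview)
Your proof is correct and follows the same route as the paper: items 1--5 are handled identically, and for 6--7 the paper invokes its Lemma~\ref{lem:rc} (if $\C$ is algebraically closed and $x+\alpha p\in\C$ for all $\alpha>0$ for \emph{some} $x\in\C$, then $p\in\rc\C$), whose proof is precisely the convex-combination-on-a-line argument you wrote out inline for item~6. So the only difference is organizational: the paper factors the key step into a separate lemma, while you embed it directly.
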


\begin{proof}
Let $c_i\in\dom\pi$, $\alpha_i>\pi(c_i)$ and $\lambda\in(0,1)$ be arbitrary. Since $-p\in\rc \C$, we have $c_i-\alpha_ip\in \C$ and since $\C$ is convex, 
\[
\lambda c_1+(1-\lambda)c_2 -(\lambda\alpha_1+(1-\lambda)\alpha_2)p = \lambda(c_1-\alpha_1p)+(1-\lambda)(c_2-\alpha_2p)\in \C.
\]
Thus, $\pi(\lambda c_1+(1-\lambda)c_2)\le \lambda\alpha_1+(1-\lambda)\alpha_2$ and the convexity of $\pi$ follows. The monotonicity property follows from $\M_-\subset\rc \C$. The translation property is immediate from the definition of $\pi$ and $\pi(0)\le 0$ holds because $0\in \C$. When $\C$ is algebraically closed, the infimum $\pi(c)=\inf\{\alpha\,|\, c-\alpha p\in \C\}$ is finite and attained for every $c\in\dom\pi$, by Lemma~\ref{lem:rc}, since $p\notin\rc \C$. This gives 6 and 7.
\end{proof}

Proposition~\ref{prop:pi} shows that the super hedging cost $\pi$ has properties close to those of a convex risk measure; see e.g.\ \cite[Chapter~4]{fs4}. As a result, many of the existing results for risk measures can be applied to $\pi$; see Section~\ref{sec:dual}. We emphasize that we do not insist on having $\pi(0)\ge 0$. The condition $\pi(0)\ge 0$ would mean that $-\alpha p\notin \C$ for all $\alpha<0$ or equivalently that $p\notin \pos \C$, where
\[
\pos \C := \bigcup_{\alpha >0}\alpha \C
\]
is a convex cone known as the {\em positive hull} of $\C$. When $\C$ is a cone, we have $\pos \C = \rc \C = \C$ and then $\pi(0)\ge 0$ as soon as $p\notin\rc \C$. In general, however, the condition $p\notin\rc \C$ is weaker than $\pi(0)\ge 0$. The condition $\pi(0)\ge 0$ is related to the well known ``law of one price''. In the case $p=(1,0,\ldots,0)$ the two conditions would be equivalent if one required exact replication instead of superhedging in the definition of $\pi$; see \cite{cdks4}. The condition $p\notin\pos\C$ is essentially what would be called ``no good deal of the second kind'' in the terminology of \cite[page 193]{jk1}.

The nonpositive number $\pi(0)$ is the smallest multiple of the premium $p$ one needs in order to find a riskless strategy in the market. If one has to deliver a claim $c\in\M$, one needs $\pi(c)-\pi(0)$ euros more. More generally, we define the {\em superhedging selling price} of a $c\in\M$ for an agent with {\em initial liabilities} $\bar c\in\M$ as 
\[
P(\bar c;c) = \pi(\bar c+c)-\pi(\bar c).
\]
Analogously, the {\em superhedging buying price} of a $c\in\M$ for an agent with {\em initial liabilities} $\bar c\in\M$ is given by $\pi(\bar c)-\pi(\bar c-c)=-P(\bar c;-c)$. It follows from convexity of $\pi$ that 
\[
P(\bar c;c)\ge -P(\bar c;-c),
\]
which means that agents with similar liabilities and similar market expectations should not trade with each other if they aim at superhedging their positions.

It is intuitively clear that the value an agent assigns to a claim should depend not only on the market expectations but also on the liabilities the agent might have already before the trade. For example, the selling price $P(\bar c^{IC};c)$ of a home insurance contract $c\in\M$ for an insurance company may be lower than the buying price $-P(\bar c^{HO};-c)$ for a home owner, even if the two had identical market expectations. Here $\bar c^{IC}$ would be the claims associated with the existing insurance portfolio of the company while $\bar c^{HO}$ would be the possible losses to the home owner associated with damages to the home. Another example would be the exchange of futures contracts between a wheat farmer and a wheat miller. In fact, many derivative contracts exist precisely because of the differences between initial liabilities of different parties. 


\begin{remark}
One could define the {\em  marginal selling price} of a claim $c\in\M$ given initial liabilities $\bar c\in\M$ as the directional derivative
\[
\pi'(\bar c;c):=\lim_{\alpha\downto 0}\frac{\pi(\bar c+\alpha c)-\pi(\bar c)}{\alpha} = \lim_{\alpha\downto 0}\frac{P(\bar c;\alpha c)}{\alpha}.
\]
Since $\pi$ is convex the limit is a well-defined and equals the infimum over $\alpha>0$; see \cite[Theorem~23.1]{roc70a}. Moreover,
\[
-P(\bar c;-c) \le -\pi'(\bar c;-c) \le \pi'(\bar c;c) \le P(\bar c;c)
\]
for every $c,\bar c\in\M$. Equality holds in the middle when $p$ is differentiable at $\bar c$ in direction $c$. Closely related ideas have been employed e.g.\ in Davis~\cite{dav97} and \cite[Section~5]{sta4}.
\end{remark}

\section{Duality}\label{sec:proofs}\label{sec:dual}

We derive dual characterizations of superhedging using functional analytic techniques much as e.g.\ in \cite{sch92}, \cite{fs4} of \cite{ds6}. Due to possible nonlinearities, our model requires a bit more convex analysis than the traditional linear models. In particular, a major role is played by the theory of {\em normal integrands} (see e.g.~\cite{roc68,roc76,rw98}), which explains the precise form of Definitions~\ref{ccp} and \ref{pcp}.

We first give dual characterizations of superhedging conditions in terms of the ``support function'' of the set of integrable claims in $\C$ under the assumption that $\C$ is closed in probability. We then give an expression for the support function in terms of $S$ and $D$, which allows for a more concrete characterizations of superhedging. Sufficient conditions for the closedness of $\C$ will be given in Section~\ref{sec:cl}.

Let $\M^1$ and $\M^\infty$ be the spaces of integrable and essentially bounded, respectively, real-valued adapted processes. Let 
\[
\C^1:=\C\cap\M^1,
\]
be the set of integrable claim processes that can be superhedged with zero cost. The bilinear form
\[
(c,y)\mapsto E\sum_{t=0}^Tc_ty_t
\]
puts $\M^1$ and $\M^\infty$ in separating duality; see \cite{roc74}. One can then use classical convex duality arguments to describe hedging conditions. This will involve the {\em support function} $\sigma_{\C^1}:\M^\infty\to\ereals$ of $\C^1$ defined by
\[
\sigma_{\C^1}(y) = \sup_{c\in\C^1}E\sum_{t=0}^Tc_t y_t.
\]
It is a nonnegative extended real-valued sublinear function on $\M^\infty$. Since $\C^1$ contains all nonpositive claim processes, the effective domain 
\[
\dom\sigma_{\C^1} = \{y\in\M^\infty\,|\,\sigma_{\C^1}(y)<\infty\}
\]
of $\sigma_{\C^1}$ is contained in the nonnegative cone $\M^\infty_+$. Moreover, since $\sigma_{\C^1}$ is sublinear, $\dom\sigma_{\C^1}$ is a convex cone. In the terminology of microeconomic theory, $\sigma_{\C^1}$ is called the {\em profit function} associated with the ``production set'' $\C^1$; see e.g.\ Aubin~\cite{aub79} or Mas-Collel, Whinston and Green~\cite{mwg95}. 

\begin{theorem}\label{thm:d}
Assume that $\C$ is closed in probability. Then
\begin{enumerate}
\item
$c\in \C^1$ if and only if $E\sum_{t=0}^Tc_ty_t \le 1$ for every $y\in\M^\infty$ such that $\sigma_{\C^1}(y)\le 1$,
\item
If $p\in\M^1$ is such that $-p\in\rc \C$ and $p\notin\rc \C$ then $\pi$ is a proper lower semicontinuous (both in norm and the weak topology) convex function on $\M^1$ with the representation
\[
\pi(c) =\sup_{y\in\M^\infty}\left\{\left.E\sum_{t=0}^Tc_ty_t - \sigma_{\C^1}(y)\,\right|\, E\sum_{t=0}^Tp_ty_t=1\right\}.
\]
\end{enumerate}
\end{theorem}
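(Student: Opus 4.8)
The plan is to derive both statements from the bipolar theorem and Fenchel--Moreau duality applied to the set $\C^1$ in the dual pair $(\M^1,\M^\infty)$. The crucial preliminary observation is that, although $\C$ is assumed closed in probability, the duality is set up between $\M^1$ and $\M^\infty$, so I first need to upgrade closedness in probability to closedness of $\C^1$ in the weak topology $\sigma(\M^1,\M^\infty)$. Since $\C^1=\C\cap\M^1$ is convex and contains $\M^1_-$, and convergence in probability of a norm-bounded sequence can be leveraged via a subsequence argument together with closedness in probability, the standard route is to show $\C^1$ is closed in the $L^1$-norm and hence, being convex, weakly closed by Mazur's lemma. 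I expect this transfer from ``closed in probability'' to ``weakly closed in $\M^1$'' to be the main obstacle, and it must be handled carefully because $\M^1$-norm convergence does not follow from convergence in probability without uniform integrability; the fix is that membership in $\C^1$ only requires the limit to lie in $\C$ (closed in probability) and in $\M^1$, so a norm-convergent sequence in $\C^1$ has a subsequence converging in probability to its $L^1$-limit, placing the limit in $\C\cap\M^1=\C^1$.

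For part~1, once $\C^1$ is weakly closed, convex, and contains $0$, the bipolar theorem applies. Because $\C^1$ is not a cone in general, I work with its support function $\sigma_{\C^1}$ rather than a polar cone. The set $\{y\in\M^\infty\mid \sigma_{\C^1}(y)\le 1\}$ is exactly the ``polar'' at level one, and the bipolar relation states that $c\in\C^1$ iff $E\sum_t c_t y_t\le 1$ for all such $y$. Concretely, I would invoke the Fenchel--Moreau theorem: the indicator $\delta_{\C^1}$ is proper, convex, and weakly lower semicontinuous, hence equals its biconjugate $\delta_{\C^1}^{**}$, and $\delta_{\C^1}^{*}=\sigma_{\C^1}$; unwinding $\delta_{\C^1}(c)=\sup_y\{E\sum_t c_ty_t-\sigma_{\C^1}(y)\}$ and using sublinearity of $\sigma_{\C^1}$ to reduce to the sublevel set at $1$ yields the stated equivalence.

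For part~2, I start from the identity $\epi\pi=\{(c,\alpha)\mid c-\alpha p\in\C\}$ furnished by Proposition~\ref{prop:pi}(7), whose hypotheses $-p\in\rc\C$ and $p\notin\rc\C$ are assumed here (note algebraic closedness needed for that proposition follows from the weak closedness of $\C^1$ just established, at least on $\M^1$). Properness and lower semicontinuity of $\pi$ then follow: the epigraph is the preimage of the weakly closed set $\C^1$ under the continuous affine map $(c,\alpha)\mapsto c-\alpha p$, so $\pi$ is weakly lsc, and properness is Proposition~\ref{prop:pi}(6). The representation is obtained by computing the conjugate of $\pi$. Using the translation property $\pi(c+\alpha p)=\pi(c)+\alpha$ from Proposition~\ref{prop:pi}(3), one shows that the conjugate $\pi^*(y)$ is finite only on the affine slice $\{y\mid E\sum_t p_t y_t=1\}$, on which $\pi^*(y)=\sigma_{\C^1}(y)$; indeed $\pi^*(y)=\sup_c\{E\sum_t c_t y_t-\pi(c)\}$ and substituting $c\mapsto c+\alpha p$ forces the normalization $E\sum_t p_t y_t=1$ to keep the supremum finite, after which the remaining supremum reproduces $\sigma_{\C^1}(y)$. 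Applying Fenchel--Moreau to the proper weakly lsc convex $\pi$ gives $\pi=\pi^{**}$, which is precisely the displayed formula. The one point requiring care is verifying that the normalization constraint is exactly $E\sum_t p_t y_t=1$ and not an inequality, which is where $p\notin\rc\C$ (ensuring $\pi>-\infty$) and $-p\in\rc\C$ (ensuring the translation direction is admissible) are both used.
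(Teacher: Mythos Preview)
Your proposal is correct and follows essentially the same route as the paper: pass from closedness in probability to norm-closedness of $\C^1$ (via a subsequence converging in probability), invoke the bipolar theorem for part~1, and for part~2 compute the conjugate of the restriction $\bar\pi$ of $\pi$ to $\M^1$, showing $\bar\pi^*(y)=\sigma_{\C^1}(y)$ on the slice $E\sum_t p_ty_t=1$ and $+\infty$ elsewhere, then apply Fenchel--Moreau. One small point of imprecision: the algebraic closedness required by Proposition~\ref{prop:pi} is that of $\C$ itself, and the paper obtains it directly from ``closed in probability $\Rightarrow$ algebraically closed'' rather than detouring through weak closedness of $\C^1$; your parenthetical ``at least on $\M^1$'' suggests you see the issue, but it is cleaner to argue as the paper does.
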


\begin{proof}
If $\C$ is closed in probability then $\C^1$ is be closed in the norm topology of $\M^1$ and the first claim is nothing but the classical bipolar theorem (see e.g.\ \cite[Theorem~14.5]{roc70a} or \cite[Section~1.4.2]{aub79}).

When $p\in\M^1$, the restriction $\bar\pi$ of $\pi$ to $\M^1$ can be written as $\bar\pi(c)=\inf\{\alpha\,|\,c-\alpha p\in \C^1\}$. The convex conjugate $\bar\pi^*:\M^\infty\to\ereals$ of $\bar\pi$ can be expressed as
\begin{align*}
\bar\pi^*(y) &= \sup_{c\in\M^1}\{E\sum_{t=0}^Tc_ty_t-\pi(c)\}\\
&= \sup_{c\in\M^1,\alpha\in\reals}\{E\sum_{t=0}^Tc_ty_t-\alpha\,|\,c-\alpha p\in \C^1\}\\
&= \sup_{c'\in\M^1,\alpha\in\reals}\{E\sum_{t=0}^T(c_t'+\alpha p_t)y_t-\alpha\,|\,c'\in \C^1\}\\
&= \sup_{c'\in\M^1,\alpha\in\reals}\{E\sum_{t=0}^Tc'_ty_t+\left(E\sum_{t=0}^Tp_ty_t-1\right)\alpha\,|\,c'\in \C^1\}\\
&=
\begin{cases}
\sigma_{\C^1}(y) & \text{if $E\sum_{t=0}^Tp_ty_t=1$},\\
+\infty & \text{otherwise}.
\end{cases}
\end{align*}
The representation for $\pi$ on $\M^1$ thus means that $\bar\pi$ equals the conjugate of $\bar\pi^*$. By \cite[Theorem~5]{roc74}, this holds when $\bar\pi$ is lower semicontinuous and $\bar\pi(c)>-\infty$ for every $c\in\M^1$. Since, by assumption, $\C$ is closed in probability it is also algebraically closed and then, by Proposition~\ref{prop:pi}, $\bar\pi(c)>-\infty$ for every $c\in\M^1$ and $\epi\bar\pi=\{(c,\alpha)\in\M^1\times\reals\,|\,c-\alpha p\in \C^1\}$. Since $\C$ is closed in probability, we have that $\C^1$ is closed in norm and then $\epi\bar\pi$ is closed in the product topology of $\M^1\times\reals$, i.e.\ $\bar\pi$ is lower semicontinuous in norm. By the classical separation argument, a convex function which is lower semicontinuous in the norm topology is lower semicontinuous also in the weak topology.
\end{proof}

The function $\sigma_{\C^1}$ plays a similar role in superhedging of claim processes as the ``penalty function'' does in the theory of convex risk measures; see e.g.~\cite[Chapter~4]{fs4}. In the conical case, Theorem~\ref{thm:d} simplifies much like the dual representation of a coherent risk measure.

\begin{corollary}\label{cor:d}
Assume that $\C$ is conical and closed in probability. Denoting the polar cone of $\C^1$ by
\[
\D^\infty = \{y\in\M^\infty\,|\,E\sum_{t=0}^Tc_ty_t\le 0\ \forall c\in\C^1\},
\]
we have
\begin{enumerate}
\item
$c\in \C^1$ if and only if $E\sum_{t=0}^Tc_ty_t \le0$ for every $y\in\D^\infty$.
\item
If $p\in\M^1$ is such that $-p\in\C$ and $p\notin\C$ then $\pi$ is a proper lower semicontinuous (both in norm and the weak topology) sublinear function on $\M^1$ with the representation
\[
\pi(c) =\sup_{y\in\D^\infty}\left\{\left.E\sum_{t=0}^Tc_ty_t\,\right|\, E\sum_{t=0}^Tp_ty_t=1\right\}.
\]
\end{enumerate}
\end{corollary}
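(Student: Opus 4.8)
The plan is to deduce the corollary from Theorem~\ref{thm:d} by exploiting the conical structure, the decisive observation being that when $\C^1$ is a cone its support function $\sigma_{\C^1}$ collapses to the indicator of the polar cone $\D^\infty$, taking the value $0$ on $\D^\infty$ and $+\infty$ elsewhere. Indeed, since $\rc\C=\C$ for a cone and $\M^1$ is a linear space, $\C^1=\C\cap\M^1$ is again a cone (closed in norm, as already noted in the proof of Theorem~\ref{thm:d}). As $0\in\C^1$ we have $\sigma_{\C^1}(y)\ge 0$ for all $y$; if $y\in\D^\infty$ then $E\sum_{t=0}^Tc_ty_t\le 0$ for every $c\in\C^1$ forces $\sigma_{\C^1}(y)=0$, whereas if $y\notin\D^\infty$ some $c\in\C^1$ gives a strictly positive value and, scaling along the cone by $\lambda\to\infty$, drives the supremum to $+\infty$. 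Verifying this dichotomy is the one genuinely new ingredient; everything else is specialization.

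For the first assertion I would start from Theorem~\ref{thm:d}(1), which tests membership in $\C^1$ against all $y$ with $\sigma_{\C^1}(y)\le 1$. By the dichotomy, this test set is exactly $\D^\infty$. Since $\D^\infty$ is a cone, requiring $E\sum_{t=0}^Tc_ty_t\le 1$ for every $y\in\D^\infty$ is equivalent, after replacing $y$ by $\lambda y$ and letting $\lambda\to\infty$, to requiring $E\sum_{t=0}^Tc_ty_t\le 0$ for every $y\in\D^\infty$; the reverse implication is trivial. This is precisely the stated bipolar characterization for the closed convex cone $\C^1$.

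For the second assertion I would observe first that the hypotheses $-p\in\C$ and $p\notin\C$ coincide, via $\rc\C=\C$, with the hypotheses of Theorem~\ref{thm:d}(2), so that theorem applies and furnishes a proper, convex, norm- and weakly lower semicontinuous $\pi$ together with its representation through $\sigma_{\C^1}$. Positive homogeneity is then added by Proposition~\ref{prop:pi}(5), upgrading convexity to sublinearity. Substituting the dichotomy into
\[
\pi(c)=\sup_{y\in\M^\infty}\left\{\left.E\sum_{t=0}^Tc_ty_t-\sigma_{\C^1}(y)\,\right|\,E\sum_{t=0}^Tp_ty_t=1\right\}
\]
discards every $y\notin\D^\infty$, since there the penalty $-\sigma_{\C^1}(y)=-\infty$ makes the bracketed quantity irrelevant to the supremum, and zeroes the penalty on the survivors, which leaves exactly the claimed formula over $y\in\D^\infty$. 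The only consistency point is that the feasible set $\{y\in\D^\infty\mid E\sum_{t=0}^Tp_ty_t=1\}$ must be nonempty for the supremum to be meaningful, but this is automatic: were it empty the supremum would equal $-\infty$, contradicting the properness of $\pi$ already supplied by Theorem~\ref{thm:d}(2).
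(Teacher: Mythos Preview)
Your proof is correct and follows essentially the same approach as the paper: both rest on the observation that when $\C$ is a cone so is $\C^1$, whence $\sigma_{\C^1}$ is the $\{0,+\infty\}$-valued indicator of the polar cone $\D^\infty$, and the corollary then falls out of Theorem~\ref{thm:d} by direct substitution. Your write-up is simply more explicit than the paper's, spelling out the scaling argument that turns $\le 1$ into $\le 0$ in part~(1), invoking Proposition~\ref{prop:pi}(5) for sublinearity, and remarking on nonemptiness of the feasible set---all details the paper leaves to the reader.
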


\begin{proof}
When $\C$ is a cone the set $\C^1$ is also a cone so that
\[
\sigma_{\C^1}(y)=
\begin{cases}
0 & \text{if $E\sum_{t=0}^Tc_ty_t\le 0$ for every $c\in\C^1$},\\
+\infty & \text{otherwise}.
\end{cases}
\]
Then, $\sigma_{\C^1}(y)\le 1$ iff $\sigma_{\C^1}(y)\le 0$ iff $y\in\D^\infty$.
\end{proof}

The superhedging condition for a premium process $p\in\M^1$ and a claim process $c\in\M^1$ can be written as $c-p\in \C$. If the premium is of the form $p=(p_0,0,\ldots,0)$ (single payment in the beginning) and the claim is of the form $c=(0,\ldots,0,c_T)$ (single payment at the end), then the first part of Corollary~\ref{cor:d} says that $p$ is a superhedging premium for $c$ if and only if 
\[
E(c_Ty_T)\le p_0y_0\quad\forall y\in\D^\infty.
\]
When $p=(1,0,\ldots,0)$, the equation $E\sum_{t=0}^Tp_ty_t=1$  in the representation for $\pi$ (both in Corollary~\ref{cor:d} and Theorem~\ref{thm:d}) becomes $y_0=1$.

When $S$ is integrable (see below), we can express $\sigma_{\C^1}$ and thus Theorem~\ref{thm:d} and Corollary~\ref{cor:d} more concretely in terms of $S$ and $D$. This will involve the space $\N^1$ of $\reals^J$-valued adapted integrable processes $v=(v_t)_{t=0}^T$ and the integral functionals 
\[
v_t\mapsto E(y_tS_t)^*(v_t)\quad\text{and}\quad v_t\mapsto E\sigma_{D_t}(v_t)
\]
associated with the normal integrands
\begin{align*}
(y_tS_t)^*(v,\omega) &:= \sup_{x\in\reals^J}\{x\cdot v - y_t(\omega)S_t(x,\omega)\}
\intertext{and}
\sigma_{D_t(\omega)}(v) &:= \sup_{x\in\reals^J}\{x\cdot v\,|\, x\in D_t(\omega)\}.
\end{align*}
That the above expressions do define normal integrands follows from \cite[Theorem~14.50]{rw98}. Since $S_t(0,\omega)=0$ and $0\in D_t(\omega)$ for every $t$ and $\omega$, the functions $(y_tS_t)^*$ and $\sigma_{D_t}$ are nonnegative. 

We will say that a cost process $S=(S_t)_{t=0}^T$ is {\em integrable} if the functions $S_t(x,\cdot)$ are integrable for every $t=0,\ldots,T$ and $x\in\reals^J$. In the classical linear case $S_t(x,\omega)=s_t(\omega)\cdot x$, integrability means that the marginal price $s$ is integrable in the usual sense. The following is from \cite{pen8}.

\begin{lemma}\label{lem:pid}
If $S$ is integrable, then for $y\in\M^\infty_+$, 
\[
\sigma_{\C^1}(y) = \inf_{v\in\N^1}\left\{\sum_{t=0}^T E(y_t S_t)^*(v_t) + \sum_{t=0}^{T-1}E\sigma_{D_t}(E[\Delta v_{t+1}|\F_t])\right\},
\]
while $\sigma_{\C^1}(y)=+\infty$ for $y\notin\M^\infty_+$. The infimum is attained for every $y\in\M^\infty_+$.
\end{lemma}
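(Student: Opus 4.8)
The plan is to prove the identity by Lagrangian duality, reading the right-hand side as the dual of a portfolio optimization problem hidden inside $\sigma_{\C^1}$. The case $y\notin\M^\infty_+$ is disposed of first: since $0\in\C$ and $\M_-\subset\rc\C$, Lemma~\ref{lem:convex} gives $\M_-\subset\C$, so $\C^1$ contains arbitrarily large negative multiples of any nonpositive claim supported where a coordinate of $y$ is negative, and $\sup_{c\in\C^1}E\sum_t c_ty_t=+\infty$. So fix $y\in\M^\infty_+$. The first step is the reduction
\[
\sigma_{\C^1}(y)=\sup\Big\{-E\sum_{t=0}^T y_tS_t(\Delta x_t)\ \Big|\ x\in\N,\ x_T=0,\ x_t\in D_t\ \text{a.s.}\Big\}.
\]
The inequality ``$\le$'' is immediate, because any $c\in\C^1$ has a witness $x$ with $c_t\le -S_t(\Delta x_t)$ and $y_t\ge0$. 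The reverse inequality is where one must stay inside $\M^1$, and I would obtain it by truncating $-S_t(\Delta x_t)$ to the bounded claim $c^n_t:=\max(\min(-S_t(\Delta x_t),n),-n)\in\M^\infty\subset\M^1$, which is still superhedged by the same $x$, and passing to the limit $n\to\infty$ by monotone convergence using $y\ge0$.

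To exhibit the right-hand side as the dual, I would relax the bookkeeping constraint $z_t=\Delta x_t$ by a multiplier process $v\in\N^1$, forming the Lagrangian on the decoupled adapted variables $z$ and $x$ (with $x_T=0$, $x_t\in D_t$). A summation by parts, using $x_{-1}=x_T=0$, turns the coupling term into $\sum_{t=0}^{T-1}\Delta v_{t+1}\cdot x_t$. Maximizing the Lagrangian then splits into a $z$-part and an $x$-part, and the interchange rule for integral functionals \cite[Theorem~14.60]{rw98} carries the suprema inside the expectation: the $z_t$-maximization produces $E(y_tS_t)^*(v_t)$, while the $x_t$-maximization, because $x_t$ must be $\F_t$-measurable, produces $E\,\sigma_{D_t}(E[\Delta v_{t+1}|\F_t])$ after conditioning by the tower rule — this is exactly why the conditional expectation appears. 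Since the relaxation is exact on feasible $(x,z)$, this simultaneously yields the dual objective and weak duality
\[
\sigma_{\C^1}(y)\le\inf_{v\in\N^1}\left\{\sum_{t=0}^T E(y_tS_t)^*(v_t)+\sum_{t=0}^{T-1}E\,\sigma_{D_t}\big(E[\Delta v_{t+1}|\F_t]\big)\right\}=:\inf_{v\in\N^1}g(v),
\]
the $\N^1$–$\N^\infty$ pairings being justified by integrability of $S$.

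The substance of the lemma is the reverse inequality together with attainment, i.e.\ the absence of a duality gap and the existence of an integrable dual minimizer. This I would establish by treating the portfolio problem of the first display as a finite-horizon convex stochastic optimization problem and applying the duality theory developed in \cite{pen8}: one proceeds by backward induction on $t$, at each stage forming the conditional value function, verifying that it is a normal integrand whose associated integral functional is closed, and invoking the interchange rule both to propagate the recursion and to pass from the primal infimum to the conjugate dual. Integrability of $S$ is precisely what keeps these integral functionals proper so that the recursion is well posed, and the closedness obtained stage by stage is what simultaneously rules out the gap and delivers an optimal $v\in\N^1$. This backward-induction, no-gap step is the main obstacle; by contrast the conjugate computation and the weak-duality estimate above are routine once the no-gap theorem is in place.
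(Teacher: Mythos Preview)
The paper does not give its own proof of this lemma: it is introduced with ``The following is from \cite{pen8}'' and no argument is supplied. So there is nothing in the paper to compare against directly. Your sketch is broadly consistent with the convex-duality approach one expects in that reference --- rewrite $\sigma_{\C^1}$ as the value of a convex stochastic control problem, relax with a multiplier $v\in\N^1$, compute the dual via summation by parts and the interchange rule, and then invoke a no-gap/attainment theorem for finite-horizon convex stochastic programs. You correctly locate the substance in this last step and, like the paper, defer it to \cite{pen8}.

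One genuine slip in your reduction step: the truncated claim $c^n_t=\max(\min(-S_t(\Delta x_t),n),-n)$ is \emph{not} superhedged by the same $x$ on the event $\{-S_t(\Delta x_t)<-n\}$, since there $c^n_t=-n>-S_t(\Delta x_t)$ and the budget constraint fails. A clean repair is to restrict the portfolio supremum to bounded strategies $x\in\N^\infty$: integrability of $S$ together with convexity (bound $S_t(\cdot,\omega)$ on a cube by its values at the vertices, and use $0=\tfrac12 S_t(z)+\tfrac12 S_t(-z)\ge 0$ for the lower bound) gives $S_t(\Delta x_t)\in L^1$ outright, so $c_t:=-S_t(\Delta x_t)$ already lies in $\M^1$ and no truncation is needed. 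The passage from $\N^\infty$ to general strategies is then absorbed into the interchange-rule computation of the dual, exactly as you describe.
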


The expression for $\sigma_{\C^1}$ in Lemma~\ref{lem:pid} can be inserted in Theorem~\ref{thm:d} and Corollary~\ref{cor:d}. In particular, we get the following.



\begin{corollary}\label{cor:polar}
If $S$ is sublinear and integrable and $D$ is conical, then the polar of $\C^1$ can be expressed as
\[
\D^\infty=\{y\in\M^\infty_+\,|\, \exists s\in\N:\ s_t\in Z_t,\ E[\Delta(y_ts_t)\,|\,\F_{t-1}]\in D_t^*\quad t=1,\ldots T\},
\]
where $Z_t(\omega)=\{s\in\reals^J\,|\,s\cdot x\le S_t(x,\omega)\ \forall x\in\reals^J\}$, $D_t^*(\omega)$ is the polar cone of $D_t(\omega)$ and the condition $E[\Delta(y_ts_t)\,|\,\F_t]\in D_t^*$ includes the assumption that $\Delta(y_ts_t)$ is integrable.
\end{corollary}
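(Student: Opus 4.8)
The plan is to read off the polar cone from the explicit support-function formula of Lemma~\ref{lem:pid}, after first using conicity to pass through Corollary~\ref{cor:d}. Since $S$ is sublinear and $D$ is conical, Lemma~\ref{lem:convex} shows $\C$ is a cone, so Corollary~\ref{cor:d} applies; its proof shows that in the conical case $\sigma_{\C^1}$ is precisely the indicator of $\D^\infty$, so that $y\in\D^\infty$ is equivalent to $\sigma_{\C^1}(y)=0$. By Lemma~\ref{lem:pid} this already forces $y\in\M^\infty_+$, and it expresses $\sigma_{\C^1}(y)$ as an infimum over $v\in\N^1$ of a sum of two integral functionals whose integrands $(y_tS_t)^*(v_t)$ and $\sigma_{D_t}(E[\Delta v_{t+1}\mid\F_t])$ are both nonnegative (because $S_t(0)=0$ and $0\in D_t$). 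Since that infimum is attained, $\sigma_{\C^1}(y)=0$ holds if and only if there is a $v\in\N^1$ at which every integrand vanishes almost surely, i.e.\ $(y_tS_t)^*(v_t)=0$ and $E[\Delta v_{t+1}\mid\F_t]\in D_t^*$ a.s.

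The second step is to identify these pointwise conditions with the sets $Z_t$ and $D_t^*$. Since $D_t(\omega)$ is a closed convex cone, $\sigma_{D_t(\omega)}$ is the indicator of the polar cone $D_t^*(\omega)$, so vanishing of the second integrand is exactly $E[\Delta v_{t+1}\mid\F_t]\in D_t^*$. For the first, sublinearity of $S_t(\cdot,\omega)$ makes it the support function of the closed convex set $Z_t(\omega)=\{s\mid s\cdot x\le S_t(x,\omega)\ \forall x\}=\partial S_t(0,\omega)$, whose conjugate is the indicator of $Z_t(\omega)$ (\cite[Theorem~13.2]{roc70a}); for $y_t(\omega)>0$ the conjugate $(y_tS_t)^*(\cdot,\omega)$ rescales to the indicator of $y_t(\omega)Z_t(\omega)$, while for $y_t(\omega)=0$ it is the indicator of $\{0\}$. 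Hence the first condition reads $v_t\in y_tZ_t$ a.s.

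It then remains to pass from $v$ to the shadow price $s$ via $v_t=y_ts_t$, and this is where the main work lies. On $\{y_t>0\}$ one sets $s_t:=v_t/y_t$, which is $\F_t$-measurable and satisfies $s_t\in Z_t$, and since $\Delta(y_ts_t)=\Delta v_t$ the conditional-expectation conditions carry over, giving the stated conditions after the routine relabelling of indices between Lemma~\ref{lem:pid} and the statement. The delicate point is the set $\{y_t=0\}$, where $v_t=0$ is forced but $s_t$ is left unconstrained by $v$: there I would choose an $\F_t$-measurable selection $s_t(\omega)\in Z_t(\omega)$. To do so I must know $Z_t$ is nonempty-valued, which I would obtain by observing that an integrable sublinear $S_t$ is finite-valued a.s.---finiteness on a countable dense set, together with convexity and lower semicontinuity, propagates to all of $\reals^J$---so that $Z_t(\omega)=\partial S_t(0,\omega)$ is a nonempty compact convex set and, being generated by the normal integrand $S_t$, a measurable closed-valued mapping; the measurable selection theorem \cite[Chapter~14]{rw98} then applies and $y_ts_t=0=v_t$ there. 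This produces $s\in\N$ (integrability of $s$ itself is neither available nor needed), proving the forward inclusion. The converse is routine: given such an $s$, note that $v_0=y_0s_0$ is integrable since $\F_0$ is trivial (so $s_0$ is deterministic) and $y_0$ is bounded; with the assumed integrability of the increments $\Delta(y_ts_t)$ this makes $v=(y_ts_t)\in\N^1$ by telescoping, so $v$ is feasible in Lemma~\ref{lem:pid} and makes both integrands vanish, giving $\sigma_{\C^1}(y)=0$, i.e.\ $y\in\D^\infty$.
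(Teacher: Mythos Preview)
Your argument is correct and follows the same route as the paper: identify $(y_tS_t)^*$ and $\sigma_{D_t}$ as indicators of $y_tZ_t$ and $D_t^*$, plug into Lemma~\ref{lem:pid}, and make the substitution $v_t=y_ts_t$. Your treatment is in fact more careful than the paper's, which simply says ``it suffices to make the substitution $v_t=y_ts_t$'' without addressing the set $\{y_t=0\}$ (where you correctly invoke a measurable selection from the nonempty-valued $Z_t$) or verifying $v\in\N^1$ for the converse (where your telescoping argument via the trivial $\F_0$ and the assumed integrability of the increments is exactly what is needed).
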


\begin{proof}
If $S$ is sublinear and $D$ is conical, we have, by Theorems~13.1 and 13.2 of \cite{roc70a}, that
\[
(y_tS_t)^*(v,\omega) = 
\begin{cases}
0 & \text{if $v\in y_t(\omega)Z_t(\omega)$},\\
+\infty & \text{otherwise}
\end{cases}
\]
and
\[
\sigma_{D_t(\omega)}(v,\omega) = 
\begin{cases}
0 & \text{if $v\in D_t(\omega)^*$},\\
+\infty & \text{otherwise}.
\end{cases}
\]
By Lemma~\ref{lem:pid}, the polar cone $\D^\infty=\{y\in\M^\infty\,|\, \sigma_{\C^1}(y)\le 0\}$ can thus be written
\[
\D^\infty=\{y\in\M^\infty_+\,|\, \exists v\in\N^1:\ v_t\in y_tZ_t,\ E[\Delta v_t\,|\,\F_{t-1}]\in D_t^*\quad t=1,\ldots T\},
\]
so it suffices to make the substitution $v_t=y_ts_t$.
\end{proof}

When $S$ is linear with $S_t(x,\omega)=s_t(\omega)\cdot x$, we have $Z_t(\omega)=\{s_t(\omega)\}$, and when there are no portfolio constraints, i.e.\ $D_t(\omega)=\reals^J$, we have $D_t^*(\omega)=\{0\}$. Thus, in the linear unconstrained case,
Corollary~\ref{cor:polar} says that 
\[
\D^\infty=\{y\in\M^\infty_+\,|\, \text{$(y_ts_t)$ is a martingale}\}.
\]
When one of the assets has nonzero constant price (a numeraire exists), the martingale property means that $y_t=E[y_T\,|\,\F_t]$, so $y_T$ is a martingale density if $y_0=1$. When $p=(1,0,\ldots,0)$, the existence of such a $y\in\D^\infty$ is implied by the properness of $\pi$ in Corollary~\ref{cor:d}. We thus have that, in the linear unconstrained case, the existence of a martingale density follows from the closedness of $\C$ and the conditions $-(1,0,\ldots,0)\in\C$ and $(1,0,\ldots,0)\notin\C$ (since now $\rc\C=\C$, by Lemma~\ref{lem:convex}). The condition $-(1,0,\ldots,0)\in\C$ is immediate from the definition of $\C$, while $(1,0,\ldots,0)\notin\C$ means that it is not possible to be fully hedged when starting with strictly negative initial wealth. This last condition is much weaker than e.g.\ the no arbitrage condition. It is like the ``law of one price'' except that we allow the throwing away of money. On the other hand, the law of one price only yields the existence of a martingale which could take negative values; see \cite{cdks4}. 

While Theorem~\ref{thm:d} gives the existence of nontrivial deflators that yield the dual representation, the following characterizes the ones that attain the supremum in the representation.

\begin{proposition}
Assume that $\C$ is closed in probability and that $p\in\M^1$ is such that $-p\in\rc\C$ and $p\notin\rc\C$. Then the supremum in the dual representation of $\pi$ is attained by those $y\in\M^\infty$ which are dominated by the superhedging selling price with initial liabilities $c$, i.e.
\[
P(c;c')\ge E\sum_{t=0}^Tc'_ty_t\quad \forall c'\in\M^1.
\]
\end{proposition}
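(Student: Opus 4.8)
The plan is to identify the dual representation in Theorem~\ref{thm:d} with the Fenchel--Moreau biconjugate $\bar\pi=\bar\pi^{**}$ in the separating duality $(\M^1,\M^\infty)$, where $\bar\pi$ is the restriction of $\pi$ to $\M^1$, and then to recognize the maximizers as the subgradients of $\bar\pi$ at $c$. Recall from the computation in the proof of Theorem~\ref{thm:d} that the conjugate of $\bar\pi$ is $\bar\pi^*(y)=\sigma_{\C^1}(y)$ whenever $E\sum_{t=0}^T p_t y_t=1$ and $\bar\pi^*(y)=+\infty$ otherwise. Consequently the constrained supremum written in Theorem~\ref{thm:d} is nothing but $\sup_{y\in\M^\infty}\{E\sum_{t=0}^T c_t y_t-\bar\pi^*(y)\}$, the affine normalization being automatically enforced by the domain of $\bar\pi^*$. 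A given $y$ therefore attains this supremum if and only if the Fenchel--Young equality $\bar\pi(c)+\bar\pi^*(y)=E\sum_{t=0}^T c_t y_t$ holds.

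Since $\bar\pi$ is proper, lower semicontinuous and convex (Theorem~\ref{thm:d}), the Fenchel--Young equality is equivalent to $y\in\partial\bar\pi(c)$, i.e.\ to the subgradient inequality $\bar\pi(c')\ge\bar\pi(c)+E\sum_{t=0}^T(c'_t-c_t)y_t$ for all $c'\in\M^1$. Writing $c'=c+\tilde c$ and rearranging turns this into $\bar\pi(c+\tilde c)-\bar\pi(c)\ge E\sum_{t=0}^T\tilde c_t y_t$ for every $\tilde c\in\M^1$, which, by the definition $P(c;\tilde c)=\pi(c+\tilde c)-\pi(c)$ of the superhedging selling price, is precisely the asserted domination $P(c;c')\ge E\sum_{t=0}^T c'_t y_t$ after renaming $\tilde c$ as $c'$. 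This chain of equivalences is exactly the claim.

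I do not expect any serious obstacle: the argument is a direct application of standard conjugate duality, and all the hypotheses it needs (properness, norm and weak lower semicontinuity, convexity, and the explicit form of $\bar\pi^*$) have already been secured in Theorem~\ref{thm:d} under the standing assumption that $\C$ is closed in probability. The one point worth verifying for consistency is that the domination condition already forces the normalization $E\sum_{t=0}^T p_t y_t=1$: taking $c'=\alpha p$ and using the translation property $\pi(c+\alpha p)=\pi(c)+\alpha$ from Proposition~\ref{prop:pi} gives $\alpha\ge\alpha E\sum_{t=0}^T p_t y_t$ for all $\alpha\in\reals$, hence equality, so the maximizers produced indeed lie in $\dom\bar\pi^*$. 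The only genuinely delicate case is $c\notin\dom\pi$, where $P(c;\cdot)$ is not well defined and the supremum is $+\infty$; as elsewhere in the paper the statement is understood for $c\in\dom\pi$, where $\pi(c)$ is finite and the subgradient manipulation is legitimate.
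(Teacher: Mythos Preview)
Your proposal is correct and follows essentially the same route as the paper: both identify the supremum with the biconjugate of $\bar\pi$, invoke the Fenchel--Young equality as the attainment criterion, and then substitute $\tilde c=c+c'$ to turn it into the domination inequality $P(c;c')\ge E\sum_{t=0}^Tc'_ty_t$. Your added checks (that the domination forces $E\sum_{t=0}^Tp_ty_t=1$ and the remark on $c\notin\dom\pi$) are extras the paper does not spell out, but the core argument is identical.
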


\begin{proof}
The dual representation can be written as
\[
\pi(c)=\sup_{y\in\M^\infty}\{E\sum_{t=0}^Tc_ty_t-\bar\pi^*(y)\},
\]
where $\bar\pi^*$ is the conjugate of the restriction of $\pi$ to $\M^1$; see the proof of Theorem~\ref{thm:d}. The supremum is attained by a $y\in\M^\infty$ iff 
\[
\pi(c)+\bar\pi^*(y)=E\sum_{t=0}^Tc_ty_t,
\]
which, by definition of $\bar\pi^*$, means that 
\[
\pi(c) + E\sum_{t=0}^T\tilde c_ty_t-\pi(\tilde c) \le E\sum_{t=0}^Tc_ty_t \quad\forall\tilde c\in\M^1.
\]
Setting $\tilde c=c+c'$, we can write this as
\[
\pi(c+c') -\pi(c) \ge E\sum_{t=0}^Tc'_ty_t \quad\forall c'\in\M^1.
\]
\end{proof}

\begin{remark}
The attainment of the supremum in the dual representation for $\pi(c)$ would be guaranteed by continuity of $\pi$ at $c$; see e.g.\ \cite[Section~6]{roc74}. In this case, the set of the attaining deflators would be nonempty and weak*-compact. In the sublinear case (when $\C$ is a cone), the converse holds, i.e.\ the weak*-compactness of the set of attaining deflators implies the continuity of $\pi$. We thus have a version of \cite[Corollary~4.35]{fs4} for the superhedging cost $\pi$. However, continuity of $\pi$ in $\M^1$ seems unlikely, unless $\Omega$ is finite. Consider, for example, a one period linear model with a cash account as e.g.\ in \cite[Chapter~1]{fs4}. If the price process $s$ is bounded, then $\dom\pi=\M^\infty$ which has empty interior in $\M^1$ so $\pi$ is nowhere continuous in $\M^1$.
\end{remark}

\section{Closedness of $\C$}\label{sec:cl}

In light of the above results, the closedness of $\C$ becomes an interesting issue. It was shown by Schachermayer~\cite{sch92} that when $S$ is a linear cost process with a cash account (see Example~\ref{ex:numeraire}) and $D=\reals^J$, the closedness is implied by the classical no arbitrage condition. In this section, we give sufficient conditions for other choices of $S$ and $D$.

In the classical linear model the finiteness of $\Omega$ is known to be sufficient for closedness even when there is arbitrage. More generally, we have the following. 

\begin{example}
If $S$ and $D$ are polyhedral and $\Omega$ is finite then $\C$ is closed. 
\end{example}

\begin{proof}
By \cite[Theorem~19.1]{roc70a} it suffices to show that $\C$ is polyhedral. The set $\C$ is the projection of the convex set
\[
E = \{(x,c)\in\N_0\times\M\,|\, x_t\in D_t,\ S_t(\Delta x_t)+c_t\le 0,\ t=0,\ldots,T\}.
\]
When $S$ and $D$ are polyhedral, we can describe the pointwise conditions $x_t\in D_t$ and $S_t(\Delta x_t)+c_t\le 0$ by a finite collection of linear inequalities. When $\Omega$ is finite, the set $E$ becomes an intersection of a finite collection of closed half-spaces. The set $\C$ is then polyhedral since it is a projection of a polyhedral convex set; see \cite[Theorem~19.3]{roc70a}.
\end{proof}

In a general nonlinear model, however, the set $\C$ may fail to be closed already with finite $\Omega$ and even under the no arbitrage condition.

\begin{example}\label{ex:fs}
Consider Example~\ref{ex:numeraire} in the case $T=1$, so that
\[
\C = \{c\in\M\,|\,\exists\tilde x_0\in\tilde D_0:\ c_0+c_1\le\tilde x_0\cdot(s_1-s_0)\}.
\]
Let $\Omega=\{\omega^1,\omega^2\}$, $\tilde J=\{1,2\}$, $\tilde D_0=\{(x^1,x^2)\,|\, (x^1+1)(x^2+1)\ge 1\}$, $\tilde s_0=(1,1)$ and 
\[
\tilde s_1(\omega)=
\begin{cases}
(1,2) & \text{if $\omega=\omega^1$},\\
(1,0) & \text{if $\omega=\omega^2$}.
\end{cases}
\]
Since $\tilde s^1$ is constant, we get
\[
\C = \{c\in\M\,|\,\exists\tilde x_0^2\in\tilde D_0^2:\ c_0+c_1\le\tilde x_0^2(s_1^2-s_0^2)\},
\]
where $\tilde D^2_0$ is the projection of $\tilde D_0$ on the second component. Since $\tilde D^2_0=(-1,+\infty)$, $s_1^2(\omega^1)-s_0^2=1$ and $s_1^2(\omega^2)-s_0^2=-1$, we get
\begin{align*}
\C &= \{c\in\M\,|\,\exists\tilde x_0>-1:\ c_0+c_1(\omega^1)\le\tilde x_0^2,\ c_0+c_1(\omega^2)\le -\tilde x_0^2\}\\
&=\{c\in\M\,|\,c_0+c_1(\omega^1)+c_0+c_1(\omega^2)\le 0,\ c_0+c_1(\omega^2)<1\},
\end{align*}
which is not closed even though the no arbitrage condition $\C\cap\M_+=\{0\}$ is satisfied.
\end{example}

In order to find sufficient conditions for nonlinear models with general $\Omega$, we resort to traditional closedness criteria from convex analysis; see \cite[Section~9]{roc70a}. Given an $\alpha>0$, it is easily checked that 
\[
(\alpha\star S)_t(x,\omega):=\alpha S_t(\alpha^{-1} x,\omega)
\]
defines a convex cost process in the sense of Definition~\ref{ccp}. If $S$ is positively homogeneous, we have $\alpha\star S=S$, but in general, $\alpha\star S$ decreases as $\alpha$ increases; see \cite[Theorem~23.1]{roc70a}. The upper limit
\[
S_t^\infty(x,\omega) := \sup_{\alpha>0}\alpha\star S_t(x,\omega),
\]
known as the {\em recession function} of $S_t(\cdot,\omega)$, describes the behavior of $S_t(x,\omega)$ infinitely far from the origin; see~\cite[Section~8]{roc70a}. Analogously, if $D$ is conical, we have $\alpha D=D$, but in general, $\alpha D$ gets smaller when $\alpha$ decreases. Since $D_t(\omega)$ is closed and convex, the intersection
\[
D_t^\infty(\omega) = \bigcap_{\alpha>0}\alpha D_t(\omega),
\]
coincides with the {\em recession cone} of $D_t(\omega)$; see \cite[Corollary~8.3.2]{roc70a}. 

An $\reals^J$-valued adapted process $s=(s_t)$ is called a {\em market price process} if $s_t\in\partial S_t(0)$ almost surely for every $t=0,\ldots,T$; see \cite{pen7}. Here,
\[
\partial S_t(0,\omega):=\{v\in\reals^J\,|\,S_t(x,\omega)\ge S_t(0,\omega)+v\cdot x\ \forall x\in\reals^J\}
\]
is the {\em subdifferential} of $S_t$ at the origin. It gives the set of marginal prices associated with infinitesimal trades. If $S_t(\cdot,\omega)$ happens to be smooth at the origin, then $\partial S_t(0,\omega)=\nabla S_t(0,\omega)$.

\begin{theorem}\label{thm:cl}
The set $\C$ is closed in probability if 
\[
D^\infty_t(\omega)\cap\{x\in\reals^J\,|\,S^\infty_t(x,\omega)\le 0\}=\{0\}
\]
almost surely for every $t=0,\ldots,T$. This holds, in particular, if there exists a componentwise strictly positive market price process and if $D^\infty\subset\reals^J_+$.
\end{theorem}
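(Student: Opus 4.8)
The plan is to prove closedness sequentially: take $c^n\in\C$ with $c^n\to c$ in probability, pass to an almost surely convergent subsequence, and build a superhedging portfolio for $c$ as an a.s.\ limit of the portfolios $x^n\in\N_0$ that superhedge the $c^n$. The only real difficulty is that the $x^n$ need not be bounded, so the heart of the argument is a \emph{recession (normalization) argument} that rules out blow-up, carried out by forward induction on $t$. Two pointwise facts from convex analysis will be used repeatedly: for a closed convex set $D\ni 0$, if $z^n\in D$, $|z^n|\to\infty$ and $z^n/|z^n|\to\hat z$ then $\hat z\in D^\infty$; and for the normal integrand $S_t$, if $\mu_n\to\infty$ and $y^n\to y$ then $\liminf_n S_t(\mu_n y^n)/\mu_n\ge S_t^\infty(y)$ (this holds for each fixed $\omega$ by convexity, $S_t(0,\omega)=0$, and lower semicontinuity).

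For the forward induction, suppose that after passing to a subsequence $x^n_0,\dots,x^n_{t-1}$ converge a.s.\ to adapted limits; I would show $x^n_t$ is a.s.\ bounded. Set $A_t:=\{\limsup_n|x^n_t|=\infty\}$ and suppose $P(A_t)>0$. Along a random subsequence on $A_t$ with $|x^n_t|\to\infty$, normalize $\hat x^n_t:=x^n_t/|x^n_t|$; being bounded it converges a.s.\ on $A_t$ (along a further random subsequence) to some $\hat x_t$ with $|\hat x_t|=1$ on $A_t$. Since $x^n_t\in D_t$, the first fact gives $\hat x_t\in D_t^\infty$ a.s.\ on $A_t$. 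Because $x^n_{t-1}$ converges while $|x^n_t|\to\infty$, also $\Delta x^n_t/|x^n_t|\to\hat x_t$; dividing the budget inequality $S_t(\Delta x^n_t)+c^n_t\le 0$ by $|x^n_t|$ and using the second fact together with $c^n_t/|x^n_t|\to 0$ gives $S_t^\infty(\hat x_t)\le 0$ a.s.\ on $A_t$. Hence $\hat x_t\in D_t^\infty\cap\{x\,|\,S_t^\infty(x)\le 0\}=\{0\}$ a.s.\ on $A_t$ by hypothesis, contradicting $|\hat x_t|=1$; so $P(A_t)=0$. A standard random-subsequence lemma for a.s.\ bounded sequences in $L^0(\F_t)$ then yields a subsequence along which $x^n_t\to x_t$ a.s.\ with $x_t$ being $\F_t$-measurable, the earlier limits being preserved. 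The base case $t=0$ is identical ($\F_0$ trivial, so $x^n_0$ deterministic). Running this for $t=0,\dots,T-1$ (the terminal portfolio is $x^n_T=0$) produces an adapted $x\in\N_0$; passing to the limit in $x^n_t\in D_t$ and $S_t(\Delta x^n_t)+c^n_t\le 0$ via closedness of $D_t$ and lower semicontinuity of $S_t$ (including the terminal constraint $S_T(-x_{T-1})+c_T\le 0$) shows $c\in\C$.

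The hard part is measure-theoretic rather than conceptual: one must extract convergent subsequences of \emph{unbounded} random portfolios while keeping the limits \emph{adapted}. This is precisely why I normalize by the $\F_t$-measurable quantity $|x^n_t|$ and process the times in increasing order, so that the hypothesis is invoked at a single date $t$; conveniently, the convergence of $x^n_{t-1}$ forces the normalized increment $\Delta x^n_t/|x^n_t|$ and the normalized holding $x^n_t/|x^n_t|$ to share the same limit $\hat x_t$, which is exactly what makes the pointwise condition $D_t^\infty\cap\{S_t^\infty\le 0\}=\{0\}$ the right hypothesis. The random-subsequence selection and the handling of the blow-up set $A_t$ are standard (cf.\ the closedness proofs in \cite{sch4,krs3,pp8}).

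Finally, the ``in particular'' claim is a short pointwise verification. If $s_t\in\partial S_t(0)$ then $S_t(\cdot,\omega)\ge s_t(\omega)\cdot(\cdot)$, whence $S_t^\infty(x,\omega)\ge s_t(\omega)\cdot x$. For $x\in D_t^\infty(\omega)\subset\reals^J_+$ with $S_t^\infty(x,\omega)\le 0$ this gives $0\ge S_t^\infty(x,\omega)\ge s_t(\omega)\cdot x\ge 0$, and componentwise strict positivity of $s_t$ together with $x\ge 0$ forces $x=0$, so the intersection reduces to $\{0\}$ and the main hypothesis holds.
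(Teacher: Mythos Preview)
Your proof is correct, and the ``in particular'' verification is identical to the paper's. The main argument, however, follows a genuinely different route from the paper's.

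For the boundedness step, you argue by contradiction via normalization: on the blow-up set you rescale by $|x^n_t|$, extract a limit direction $\hat x_t$ of unit length, and show $\hat x_t\in D_t^\infty\cap\{S_t^\infty\le 0\}=\{0\}$. The paper instead shows boundedness directly: it observes that $x_t^\nu(\omega)$ lies in the set $D_t(\omega)\cap\bigl[\{x:S_t(x,\omega)+\underbar c_t(\omega)\le 0\}+a_{t-1}(\omega)\uball\bigr]$ and invokes \cite[Theorem~8.4]{roc70a} to conclude this set is bounded because its recession cone, computed via Corollary~8.3.3 and Theorems~9.1 and 8.7 of \cite{roc70a}, is exactly $D_t^\infty(\omega)\cap\{S_t^\infty(\cdot,\omega)\le 0\}=\{0\}$. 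For the passage to the limit, you extract (random) subsequences at each step of the forward induction, while the paper establishes a.s.\ boundedness of the whole process first and then applies Komlos's principle once to obtain convex combinations $\bar x^\mu\to x$; convexity of $D$ and $S$ is then used to see that the $\bar x^\mu$ still satisfy the budget constraints for the corresponding convex combinations $\bar c^\mu$ of the claims. Your approach is the one standard in the arbitrage-closedness literature (\cite{sch4,krs3}) and avoids the Komlos lemma; the paper's approach avoids the random-subsequence bookkeeping and the contradiction argument by appealing directly to Rockafellar's recession calculus.
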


\begin{proof}
Let $(c^\nu)_{\nu=1}^\infty$ be a sequence in $\C$ converging to a $c$. By passing to a subsequence if necessary, we may assume that $c^\nu\to c$ almost surely. Let $x^\nu\in\N_0$ be a superhedging portfolio process for $c^\nu$, i.e.\ 
\[
x^\nu_t\in D_t,\quad S_t(x^\nu_t-x^\nu_{t-1}) + c^\nu_t\le 0
\]
almost surely for $t=0,\ldots,T$ and $x^\nu_{-1}=x^\nu_T=0$. We will show that the sequence $(x^\nu)_{\nu=1}^\infty$ is almost surely bounded. 

Assume that $x_{t-1}^\nu$ is almost surely bounded and let $a_{t-1}\in L^0$ be such that $x_{t-1}^\nu\in a_{t-1}\uball$ almost surely for every $\nu$. Defining $\underbar c_t(\omega)=\inf c^\nu_t(\omega)$ we then get that
\begin{align*}
x^\nu_t(\omega)&\in D_t(\omega)\cap\{x\in\reals^J\,|\,S_t(x-x^\nu_{t-1}(\omega),\omega)+c^\nu_t(\omega)\le 0\}\\
&\subset D_t(\omega)\cap\left[\{x\in\reals^J\,|\,S_t(x,\omega)+c^\nu_t(\omega)\le 0\} + a_{t-1}(\omega)\uball\right]\\
&\subset D_t(\omega)\cap\left[\{x\in\reals^J\,|\,S_t(x,\omega)+\underbar c_t(\omega)\le 0\} + a_{t-1}(\omega)\uball\right].
\end{align*}
By \cite[Theorem~8.4]{roc70a}, this set is bounded exactly when its recession cone consists only of the zero vector. By Corollary~8.3.3 and Theorems~9.1 and 8.7 of \cite{roc70a}, the recession cone can be written as
\[
D^\infty_t(\omega)\cap\{x\in\reals^J\,|\,S^\infty_t(x,\omega)\le 0\},
\]
which equals $\{0\}$, by assumption. It thus follows that $x^\nu_t$ is almost surely bounded and then, by induction, the whole sequence $(x^\nu)_{\nu=1}^\infty$ has to be almost surely bounded.

By Komlos's principle of subsequences (see e.g.\ \cite[Lemma~1.69]{fs4}), there is a sequence of convex combinations
\[
\bar x^\mu = \sum_{\nu=\mu}^\infty\alpha^{\mu,\nu}x^\nu
\]
that converges almost surely to an $x$. Since $c^\nu\to c$ almost surely, we also get that 
\[
\bar c^\mu := \sum_{\nu=\mu}^\infty\alpha^{\mu,\nu}c^\nu \to c\quad P\text{-a.s.}.
\]
By convexity, of $D$ and $S$,
\[
\bar x^\mu_t\in D_t,\quad S_t(\bar x^\mu_t-\bar x^\mu_{t-1})+\bar c^\mu_t\le 0
\]
and then, by closedness of $D_t(\omega)$ and lower semicontinuity of $S_t(\cdot,\omega)$,
\[
x_t\in D_t,\quad S_t(x_t-x_{t-1})+c_t\le 0.
\]
Thus, $c\in \C$ and the first claim follows.

If $s\in\partial S(0)$ is a market price process, then $s_t(\omega)\cdot x\le S_t(x,\omega)$ for every $x\in\reals^J$ and thus $s_t(\omega)\cdot x\le S^\infty_t(s,\omega)$ for every $x\in\reals^J$. If we also have $D^\infty\subset\reals^J_+$, then
\[
D^\infty_t(\omega)\cap\{x\in\reals^J\,|\,S^\infty_t(x,\omega)\le 0\}\subset \reals^J_+\cap\{x\in\reals^J\,|\,s_t(\omega)\cdot x\le 0\},
\]
which reduces to the origin when $s$ is strictly positive.
\end{proof}

The set $D_t^\infty(\omega)$ consists of portfolios that can be scaled by arbitrarily large positive numbers without ever leaving the set $D_t(\omega)$ of feasible portfolios. By \cite[Theorem~8.6]{roc70a}, the set
\[
\{x\in\reals^J\,|\,S^\infty_t(x,\omega)\le 0\}
\]
gives the set of portfolios $x$ such that the cost $S_t(\alpha x,\omega)$ is nonincreasing as a function of $\alpha$. Since $S_t(0,\omega)=0$, we also have $S_t(x,\omega)\le 0$ for every $x$ with $S^\infty_t(x,\omega)\le 0$. 

The existence of a strictly positive market price process in Theorem~\ref{thm:cl} is a natural assumption in many situations. In double auction markets, for example, it means that ask prices of all assets are always strictly positive. The condition $D^\infty_t(\omega)\subset\reals^J_+$ means that if a portfolio $x\in\reals^J$ has one or more negative components then $\alpha x$ leaves the set $D_t(\omega)$ for large enough $\alpha>0$. This holds in particular if portfolios are not allowed to go infinitely short in any of the assets.

Example~\ref{ex:fs} shows that the no arbitrage condition does not imply the conditions of Theorem~\ref{thm:cl} (in Example~\ref{ex:fs}, $D_0^\infty(\omega)=\reals\times\reals_+^2$ and $S_t^\infty(x,\omega)=s_t(\omega)\cdot x$). On the other hand, the conditions of Theorem~\ref{thm:cl} may very well hold (and thus, $\C$ be closed) even when the no arbitrage condition is violated.

\begin{example}
Let $S_t(x,\omega)=s_t(\omega)\cdot x$ where $s=(s_t)$ is a componentwise strictly positive marginal price process. It is easy to construct examples of $s$ that allow for arbitrage in an unconstrained model. Let $\bar x\in\N_0$ be an arbitrage strategy in such a model and consider another model with constraints defined by $D_t(\omega)=\{x\in\reals^J\,|\,x\ge\bar x_t(\omega)\}$. In this model, $\bar x$ is still an arbitrage strategy but now the conditions of Theorem~\ref{thm:cl} are satisfied so $\C$ is closed.
\end{example}

\section{Conclusions}\label{sec:conclusions}

We have studied superhedging without assuming the no arbitrage condition. What was crucial in our derivation of the dual characterizations in Theorem~\ref{thm:d}, was that $-p\in\rc \C$ and $p\notin\rc \C$ and that $\C$ is closed in probability. These conditions may very well hold under arbitrage; see Sections~\ref{sec:sh} and \ref{sec:cl}. For the more concrete characterizations with Lemma~\ref{lem:pid}, the theory of normal integrands and thus the properties of $S$ and $D$ in Definitions~\ref{ccp} and \ref{pcp} were essential.

Most of the results in this paper were stated in terms of the set $\C$ of claim processes hedgeable with zero cost. This means that the results are not tied to the particular market model presented in Section~\ref{sec:mm} but apply to any model where the set $\C$ is closed in probability and has the properties in Lemma~\ref{lem:convex}. In particular, one could look for conditions that yield convexity in market models with long terms prices impacts; see e.g.\ Alfonsi, Schied and Schulz~\cite{ass}.

In reality, one rarely looks for superhedging strategies when trading in practice. Instead, one (more or less quantitatively) sets bounds on acceptable levels of ``risk'' when taking positions in the market and  when quoting prices. This takes us beyond the completely riskless superhedging formulations studied in this paper; see e.g.\ \cite[Chapter~8]{fs4}. Nevertheless, superhedging forms the basis for more general formulations of pricing and hedging problems. Closedness and duality results such as the ones derived in this paper are in a significant role also in these more general frameworks.

\section*{Appendix}

The following is well-known in convex analysis but because of its importance, we give a proof.

\begin{lemma}\label{lem:rc}
The recession cone of a convex set $C$ is a convex cone. If $C$ is algebraically closed, then $y\in\rc C$ if there exists even one $x\in C$ such that $x+\alpha y\in C$ for every $\alpha>0$.
\end{lemma}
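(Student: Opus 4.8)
The plan is to handle the two assertions separately, the first being essentially formal and the second being the substantive one.

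For the first assertion I would argue directly from the definition of $\rc C$, without even invoking convexity of $C$. To see that $\rc C$ is a cone, take $c\in\rc C$ and $\lambda>0$; then for every $c'\in C$ and $\alpha>0$ we have $c'+\alpha(\lambda c)=c'+(\alpha\lambda)c\in C$ since $\alpha\lambda>0$, so $\lambda c\in\rc C$. For closure under addition, take $c_1,c_2\in\rc C$; given $c'\in C$ and $\alpha>0$, the point $c'':=c'+\alpha c_1$ lies in $C$, and then $c''+\alpha c_2=c'+\alpha(c_1+c_2)\in C$, so $c_1+c_2\in\rc C$. Because a cone that is closed under addition is automatically convex (for $c_1,c_2\in\rc C$ and $\lambda\in[0,1]$ the points $\lambda c_1$ and $(1-\lambda)c_2$ lie in $\rc C$, hence so does their sum), this already yields that $\rc C$ is a convex cone.

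For the second assertion, fix $c'\in C$ and $\alpha>0$; since these are arbitrary, it suffices to show $c'+\alpha y\in C$. The hypothesis supplies an entire ray $x+\beta y\in C$ for all $\beta>0$, and the idea is to reach the target point $c'+\alpha y$ as a limit of convex combinations of $c'$ with points $x+\beta y$ lying far out on this ray. Concretely, for $\beta>\alpha$ set $t=\alpha/\beta\in(0,1)$; convexity of $C$ gives
\[
(1-t)c'+t(x+\beta y)=c'+\alpha y+\frac{\alpha}{\beta}(x-c')\in C .
\]
Writing $s=\alpha/\beta$, this says that $z+s(x-c')\in C$ for every $s\in(0,1)$, where $z:=c'+\alpha y$ is the point we wish to capture. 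The point of this coupled reparametrization is that the approach to $z$ now takes place along the single line through $z$ in the fixed direction $x-c'$, which is exactly the setting in which algebraic closedness bites. Indeed, putting $w:=x+\alpha y$ (the value at $s=1$, which lies in $C$ by hypothesis) and using $z+s(x-c')=(1-s)z+sw$, we see that the half-open segment $[w,z)$ is contained in $C$; algebraic closedness of $C$ then forces its endpoint $z=c'+\alpha y$ into $C$, as required.

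I expect the only genuine obstacle to be the choice of parametrization in the second part: one must let $\beta\to\infty$ and $t\to0$ in the coupled way $t=\alpha/\beta$ so that the moving point travels along a line segment in the direction $x-c'$ rather than drifting off in the ray direction $y$. Once the approach is arranged along a segment, the appeal to algebraic closedness is immediate, and everything else reduces to the routine manipulation of convex combinations displayed above.
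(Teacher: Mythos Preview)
Your proof is correct and, for the second (substantive) assertion, is exactly the paper's argument: both form the convex combination $(1-\tfrac{\alpha}{\beta})c' + \tfrac{\alpha}{\beta}(x+\beta y) = c' + \alpha y + \tfrac{\alpha}{\beta}(x-c')$ and then invoke algebraic closedness along the line through $c'+\alpha y$ in the direction $x-c'$.

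The only difference is in the first assertion. The paper proves convexity of $\rc C$ directly by taking $y_1,y_2\in\rc C$, $\lambda\in[0,1]$, noting $x+\alpha y_i\in C$, and using convexity of $C$ to get $\lambda(x+\alpha y_1)+(1-\lambda)(x+\alpha y_2)=x+\alpha(\lambda y_1+(1-\lambda)y_2)\in C$. You instead show $\rc C$ is closed under addition via the two-step argument $c'\mapsto c'+\alpha c_1\mapsto c'+\alpha(c_1+c_2)$, which does not appeal to convexity of $C$ at all. This is a mild sharpening (it shows $\rc C$ is a convex cone for \emph{any} set $C$), though in the present context $C$ is assumed convex anyway, so nothing is gained in the application.
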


\begin{proof}
It is clear that $\rc C$ is a cone. As for convexity, let $y_1,y_2\in\rc C$ and $\lambda\in[0,1]$. It suffices to show that $x+\alpha (\lambda y_1+(1-\lambda)y_2)\in C$ for every $x\in C$ and $\alpha>0$. Since $y_i\in\rc C$, we have $x+\alpha y_i\in C$ and then, by convexity of $C$, $\lambda(x+\alpha y_1)+(1-\lambda)(x+\alpha y_2) = x + \alpha(\lambda y_1 + (1-\lambda)y_2)\in C$. 

Let $x\in C$ and $y\ne 0$ be such that $x+\alpha y\in C$ $\forall\alpha> 0$ and let $x'\in C$ and $\alpha'>0$ be arbitrary. It suffices to show that $x'+\alpha' y\in C$. Since $x+\alpha y\in C$ for every $\alpha\ge\alpha'$, we have, by convexity of $C$, 
\[
x'+\alpha'y + \frac{\alpha'}{\alpha}(x-x') = (1-\frac{\alpha'}{\alpha})x'+\frac{\alpha'}{\alpha}(x+\alpha y) \in C\quad\forall \alpha\ge\alpha'.
\] 
Since $C$ is algebraically closed, we must have $x'+\alpha'y\in C$.
\end{proof}

\bibliographystyle{plain}
\bibliography{/home/teemu/doc/sp/sp}

\end{document}